\title{Abstract Interpretation with Higher-Dimensional
Ellipsoids\\ and Conic Extrapolation
\thanks{Published in the Proceedings of CAV 2015.
The final publication is available at
{http://link.springer.com/chapter/10.1007/978-3-319-21690-4\_24} } }
\author{Mendes Oulamara\thanks{This material is based upon work
supported by the National Science Foundation under Grant
No.~1136008.}\\
\emph{École Normale Supérieure} \\
\emph{45 rue d'Ulm} \\ \emph{75005 Paris, France}\\
mendes.oulamara@ens.fr\\
 \and Arnaud J. Venet \\
 \emph{Carnegie Mellon University}\\
 \emph{NASA Ames Research Center}\\
\emph{Moffett Field, CA 94035} \\
arnaud.venet@west.cmu.edu}
\date{CAV 2015, 18-24 July 2015}
\theoremstyle{plain} \newtheorem{theorem}{Theorem}
\theoremstyle{plain} \newtheorem{lemma}{Lemma}
\theoremstyle{definition} \newtheorem{definition}{Definition}
\newcommand{\st}{\text{ s.t. }}
\newcommand{\W}{\bigtriangledown}
\newcommand{\Sk}{\sum\limits_{i=1}^k}
\newcommand{\segk}{\llbracket1,k\rrbracket}
\newcommand{\on}[1]{\operatorname{#1}}
\newcommand{\Ell}{\operatorname{Ell}}
\begin{document}
\maketitle

\begin{abstract}
The inference and the verification of numerical relationships among
variables of a program is one of the main goals of static analysis.
In this paper, we propose an Abstract Interpretation framework based
on higher-dimensional ellipsoids to automatically discover symbolic
quadratic invariants within loops, using loop counters as implicit
parameters. In order to obtain non-trivial invariants, the diameter
of the set of values taken by the numerical variables of the program
has to evolve (sub-)linearly during loop iterations. These invariants
are called ellipsoidal cones and can be seen as an extension of
constructs used in the static analysis of digital filters.
Semidefinite programming is used to both compute the numerical
results of the domain operations and provide proofs (witnesses)
of their correctness.

\textbf{keywords:} static analysis, semidefinite programming,
ellipsoids, conic extrapolation
\end{abstract}

\section{Introduction}

Ellipsoids have been widely used
to overapproximate convex sets. For instance, in Control Theory
they naturally arise as sublevel sets of quadratic Lyapunov
functions. They are chosen to minimize some criterion, such as the
volume.
In Abstract Interpretation~\cite{cousot}, they have been used to
compute
bounds on the output of linear digital filters~\cite{esop,nsad}.
Roux~\textit{et al.}~\cite{plg,plg2} further extended that approach
by borrowing
techniques from Semidefinite Programming (SDP).
However, all those works try to recover an ellipsoid that is known
to exist as
the Lyapunov invariant of some control system from the numerical
algorithm implementing that system.
The analysis algorithms are tailored for the particular type of
numerical code considered.
Ellipsoids are interesting in and of themselves because they
provide a space-efficient
yet expressive representation of convex sets in higher dimensions
(quadratic compared to exponential for polyhedra).
In this paper, we devise an Abstract Interpretation
framework~\cite{cousot2} to
automatically compute an overapproximation of the values of the
numerical variables in a program
by an ellipsoid.

We focus our attention on the case when the program variables grow linearly with
respect to the enclosing loop counters. We call this approximation
an `ellipsoidal cone'.
Our work also relates to the gauge domain
\cite{gauge}, which discovers simple linear
relations between loop counters and the numerical variables of a
program.
Even though the definitions of the abstract operations are general,
this model arises more naturally when the analyzed system naturally tends to
exhibit quadratic invariants, for instance in the analysis of switched
linear systems.
Section~\ref{ellop} defines the basic ellipsoidal operations and
their verification, and Sect.~\ref{conop} extends this to the
conic extrapolation.
The soundness of our analysis relies on the verification of Linear
Matrix Inequalities (LMI), which we describe in Sect.~\ref{LMI}
before delving into the description of ellipsoidal cones.
Finally Sect.~\ref{switch} presents experiments and discusses
applications to switched linear systems.

\section{Ellipsoidal Operations}
\label{ellop}

Ellipsoids are the building blocks of our conic extrapolation. We
define how to compute the result of basic operations (union,
affine transformation\ldots). Since there is generally no minimal
ellipsoid in the sense of inclusion, we choose the heuristic of
minimizing the volume. Other choices, such as minimizing the so
called `condition number' or preserving the shape, are compared in
\cite{plg}.

We mainly rely on SDP optimization methods
\cite{lectmodconv,sdp,sdp2} both to find a
covering ellipsoid and test the soundness of our result. However,
we do not rely on the correctness of the SDP solver. For each
operation whose arguments and results are expressed in function of
matrices $(A_i)_{1\le i\le r}$, we define a linear matrix inequality (LMI) of
the form
$
\sum\limits_{i=0}^r \alpha_iA_i \succeq 0
$,
where $A\succeq 0$ means ``$A$ is semidefinite positive'', such that
proving the soundness of the result is equivalent to showing that
the LMI is satisfied for some reals $(\alpha_i)$. We find
$(\alpha_i)$
candidates using an SDP solver and then verify the inequality with
a sound procedure described in Sect.~\ref{LMI}.

\begin{definition}[Ellipsoid]
\label{defell}
$\Ell(Q,c)=\{ x\in\mathbb{R}^n | (x-c)^TQ(x-c)\le 1\}$ is the definition
of an ellipsoid where
$c\in\mathbb{R}^n$ and $Q$ is a definite positive $n\times n$ matrix.
For practical use, we also define the function
$F:(Q,c)\mapsto
\begin{pmatrix}
Q & -Qc \\
-c^TQ & ~c^TQc-1
\end{pmatrix}$.
\end{definition}

\subsection{A Test of Inclusion}

Let $\Ell(Q,c)$ and $\Ell(Q^*,c^*)$ be two ellipsoids, using the
function $F$ of Definition~\ref{defell} we have the
following duality result (proven in \cite{yildrim}):
\begin{theorem}
\label{Sproc}

\begin{multline*}
\max_{x\in\Ell(Q,c)}\left((x-c^*)^TQ^*(x-c^*)-1\right) = \\
\min_{\lambda,\beta\in\mathbb{R}}\left\{\beta \st \lambda\ge 0 \mbox{ and }
\beta E_{n+1} + \lambda F(Q,c) \succeq F(Q^*,c^*)
\right\}
\end{multline*}

Where $E_{n+1}$ is an $(n+1)\times (n+1)$ matrix, with $(E_{n+1})_{i,j}=1$ if
$i=j=n+1$, else $(E_{n+1})_{i,j}=0$.
Hence $\Ell(Q,c)\subset \Ell(Q^*, c^*)$ if and only if the minimizing
value $\beta^*$ is nonpositive.
\end{theorem}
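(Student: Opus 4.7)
The identity is strong Lagrangian duality for a quadratically-constrained quadratic program, an instance of the S-procedure for two quadratic forms. The approach is to homogenize both forms, dispatch the easy weak-duality direction by substitution into the LMI, and then invoke the S-lemma for the nontrivial direction.

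First I would introduce the lift $y = (x^T,1)^T\in\mathbb{R}^{n+1}$ and record the identities $y^T F(Q,c)y = (x-c)^T Q(x-c)-1$, $y^T F(Q^*,c^*)y = (x-c^*)^T Q^*(x-c^*)-1$, and $y^T E_{n+1}y = 1$; all three are immediate from the block definition of $F$. Weak duality then follows by inspection: given any $(\lambda,\beta)$ feasible for the right-hand side, evaluating $y^T[\beta E_{n+1}+\lambda F(Q,c)-F(Q^*,c^*)]y\ge 0$ at $y=(x;1)$ gives $\beta+\lambda[(x-c)^T Q(x-c)-1]\ge(x-c^*)^T Q^*(x-c^*)-1$ for every $x$. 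For $x\in\Ell(Q,c)$ the bracketed term is $\le 0$ and $\lambda\ge 0$, so $(x-c^*)^T Q^*(x-c^*)-1\le\beta$; taking the supremum over $\Ell(Q,c)$ yields LHS $\le$ RHS.

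For strong duality, let $\beta^*$ denote the LHS maximum and set $f(x)=(x-c^*)^T Q^*(x-c^*)-1-\beta^*$, $g(x)=(x-c)^T Q(x-c)-1$. By definition of $\beta^*$, $f(x)\le 0$ on $\{g\le 0\}$, and Slater's condition holds trivially at $x=c$ where $g(c)=-1<0$. The S-lemma for two quadratic forms then produces $\lambda\ge 0$ with $\lambda g(x)-f(x)\ge 0$ for all $x\in\mathbb{R}^n$. Since a quadratic polynomial in $x$ is globally nonnegative iff its $(n+1)\times(n+1)$ homogenization matrix is PSD, and since the computations above identify that homogenization as exactly $\beta^* E_{n+1}+\lambda F(Q,c)-F(Q^*,c^*)$, the pair $(\lambda,\beta^*)$ is RHS-feasible, proving RHS $\le\beta^*$.

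The main obstacle is the S-lemma itself: its nontrivial content is the fact that under Slater's condition the image cone $\{(g(x),f(x))\mid x\in\mathbb{R}^n\}+\mathbb{R}_{\ge 0}^{2}$ is convex, so a Hahn--Banach separation can be applied with the correct sign of the multiplier. Since the paper cites this as proven in the external reference, I would invoke that result rather than reprove it. The only auxiliary point worth spelling out is the polynomial-to-matrix equivalence, which one checks by restricting the quadratic form to $y_{n+1}=1$ for the ``finite'' inequality and by examining leading quadratic behavior for the ``at infinity'' part, giving the PSD condition on the full block. Finally, the inclusion criterion is immediate: $\Ell(Q,c)\subset\Ell(Q^*,c^*)$ iff $(x-c^*)^T Q^*(x-c^*)\le 1$ for every $x\in\Ell(Q,c)$, iff the LHS maximum is nonpositive, iff by the established duality $\beta^*\le 0$.
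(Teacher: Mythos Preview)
Your argument is correct: the homogenization via $y=(x^T,1)^T$ gives the stated identities for $F$ and $E_{n+1}$, weak duality follows by evaluating the LMI at $y$, and strong duality is precisely the S-lemma applied to the pair $f,g$ with Slater point $x=c$; the passage from ``$\lambda g-f\ge 0$ on $\mathbb{R}^n$'' to the PSD condition on $\beta^* E_{n+1}+\lambda F(Q,c)-F(Q^*,c^*)$ is exactly the polynomial-to-matrix equivalence you sketch. Note, however, that the paper does not actually prove Theorem~\ref{Sproc}: it simply states the result and cites \cite{yildrim} for the proof. Your S-procedure derivation is the standard route to this duality and is almost certainly what the cited reference does, so there is nothing substantive to compare --- you have supplied the argument the paper chose to outsource.
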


For given $\Ell(Q,c)$, $\Ell(Q^*,c^*)$ and candidates $\lambda$ and
$\beta$ computed by the SDP solver, the right hand term provides the
LMI to check.

\subsection{Computation of the Union}
Let $(\Ell(Q_i,c_i))_{1\le i\le p}$ be $p$ ellipsoids, we want to
find an ellipsoid $\Ell(Q^*, c^*)$ which is of nearly minimal volume
containing them.
To do so, we can solve the following SDP problem. It is decomposed into
a first part ensuring the inclusion, proven in \cite{yildrim}, and a
second part describing the volume minimization criterion, proven in
\cite[example 18d]{lectmodconv}.

The unknowns of the SDP problem are $X$ an $n\times n$
symmetric matrix, $z\in\mathbb{R}^n$ a vector, $\Delta$ a lower triangular matrix
and real numbers $t$, $(\tau_i)_{1\le i \le p}$ $(u_i)_{1\le i \le 2^{l+1}-2}$ where
$n\le 2^l < 2n$:

\begin{equation}
\begin{aligned}
\label{sdpU}
\mbox{\textbf{maximize} } &t \mbox{ \textbf{such that}}&\\
&\mbox{\textbf{Inclusion conditions, see \cite{yildrim}:}}\\
& \forall i, 1\le i\le p,\mbox{ } \exists \tau_i\ge 0 \st \\
&	\tau_i
	\begin{pmatrix}
	Q_i & ~-Q_ic_i & 0 \\
	-c_i^TQ_i & ~c_i^TQ_ic_i-1 ~~& 0 \\
	0 & 0 & 0
	\end{pmatrix}
	\succeq
	\begin{pmatrix}
	X & -z & 0 \\
	-z^T~ & -1~ & z^T \\
	0 & z & -X
	\end{pmatrix} \\
&\mbox{\textbf{
		  	Volume minimization, see \cite[example 18d]{lectmodconv}:} }\\
&\begin{pmatrix}
X & \Delta \\
\Delta^T & D(\Delta)
\end{pmatrix}
\succeq 0 \\
&\mbox{ where $D(\Delta)$ is the diagonal matrix with the
			 diagonal of $\Delta$. } \\
&\begin{pmatrix}
u_1 & t \\
t & u_2
\end{pmatrix} \succeq 0 \mbox{ and }
\forall i, 1\le i \le 2^l-2,
\begin{pmatrix}
u_{2i+1} & u_i \\
u_i & u_{2i+2}
\end{pmatrix} \succeq 0 \\
&\forall i, 2^l-1\le i < 2^l-1+n, u_i=\delta_{i-2^l+2} \\
&	\mbox{ where $(\delta_1,\ldots,\delta_n)$ are the diagonal
	coefficients of $\Delta$.} \\
&\forall i, 2^l-1+n\le i \le 2^{l+1}-2, u_i=1
\mbox{ }\\
\end{aligned}
\end{equation}

We then define $Q^*=X$ and $c^*=Q^{*-1}z$ (in floating-point
numbers, then we possibly increase the ratio of $Q$ to
ensure the inclusion condition).
We can check that the resulting ellipsoid really contains the others
with Theorem~\ref{Sproc}.

\subsection{Affine Assignments}
\label{ellaff}
In this section, we are interested in computing the sound
counterpart of an assignment
$
x \gets Ax+b
$, where $x$ is the vector of variables, $A$ is a matrix and $b$ a
vector.

\subsubsection{Computation.}
We want to find a minimal volume ellipsoid such that the inclusion~
$\Ell(Q^*, c^*)\supset \{Ax+b | (x-c)^TQ(x-c)\le 1\}$ is verified.

By a symmetry argument, we can set $c^*=Ac+b$. By expanding the inclusion
equation, we find ~
$
\{Ax+b | x\in\Ell(Q,c)\}
				\subset \Ell(Q^*,Ac+b)
\iff  Q\succeq A^TQ^*A
$.

Hence $Q^*$ is a solution of the following SDP problem with unknowns
$X$ an $n\times n$
symmetric matrix, $z\in\mathbb{R}^n$ a vector, $\Delta$ a lower triangular matrix
and real numbers $t$, $(\tau_i)_{1\le i \le p}$ $(u_i)_{1\le i \le 2^{l+1}-2}$ where
$n\le 2^l < 2n$:

\begin{equation}
\begin{aligned}
&\mbox{\textbf{Volume minimization:}}\\
	&\mbox{The same constraints and objective as in (\ref{sdpU}).}\\
&\mbox{\textbf{Inclusion conditions:}}\\
&Q \succeq A^TXA \mbox{ and }
	\frac{1}{\epsilon}\on{Id} \succeq X
	\mbox{ where $\on{Id}$ is the $n\times n$ identity matrix and $\epsilon>0$}
\end{aligned}
\end{equation}

We then set $Q^*=X$ and $c^*=Ac+b$.

The second inclusion condition is here to ensure the numerical
convergence of the SDP solving algorithm: if $A$ is singular, the
image by $A$ of an ellipsoid is a flat ellipsoid. With
this condition, we ensure that $\Ell(Q^*, c^*)$ contains a ball of
radius $\epsilon$.

To add an input defined by the convex hull of a finite set of vectors
(for instance a hypercube), we can just compute the sum for every
one of these vectors and compute the union~\cite{plg}.

\subsubsection{Verification.}
The previous procedure gives us inequalities whose correctness
ensures that the ellipsoid $\Ell(Q^*, Ac+b)$ contains the image of
$\Ell(Q,c)$ by
$x\mapsto Ax+b$. However, $c^*=Ac+b$ is computed in floating-point
arithmetic, hence the soundness does not extend to our actual result
$\Ell(Q^*,c^*)$. Therefore, we have to devise a test of inclusion for
an arbitrary $c^*$. Let us compute the resulting center in two steps:
 we first assume that $b=0$. We have from \cite{yildrim}:

\begin{equation}
\label{verifassign}
\begin{aligned}
&\left(\forall x, x\in\Ell(Q,c)
		\Rightarrow Ax\in\Ell(Q^*,c^*)\right)\\
\iff &
\max_{x\in\Ell(Q,c)}(x^TA^TQ^*Ax-2x^TA^TQ^*c^*+c^{*T}Q^*c^*)\le 1 \\
\iff & \min_{\lambda,\beta\in\mathbb{R}}\left\{\beta | \lambda \ge 0 \mbox{ and }
	\lambda F(Q,c)+\beta E_{n+1} \succeq G \right\} \le 0 \\
	&\mbox{\quad\quad\quad\quad where } G =
	\begin{pmatrix}
	A^TQ^*A & -A^TQ^*c^* \\
	(-A^TQ^*c^*)^T ~~& c^{*T}Q^*c^*-1
	\end{pmatrix}
\end{aligned}
\end{equation}

We can hence verify the inclusion by finding suitable parameters and
verifying the resulting LMI.

We finally have to perform the sound computation of the center translation
$(c+b)$. Again,
this is computed in floating-point arithmetic and we may have to
increase the
ratio of $Q$ to ensure the verification of the inclusion condition in
interval arithmetics: in the test of Theorem~\ref{Sproc}, we first
compute $(c+b)$ and $F(Q,c+b)$ in floating-point arithmetic (and possibly
increase the ratio), and with the
LMI we check that it ``contains'' $F(Q,c+b)$ directly computed in
interval arithmetic.

\subsection{Variable Packing}
It can be useful to analyze groups of variables independently, and merge
the results.
Given a set of variables
$\{x_1,\ldots, x_p, x_{p+1},\ldots,x_{p+q}\}$ and an ellipsoidal
constraint over these variables $(Q,c)$, we can find an ellipsoidal
constraint linking $x_1,\ldots,x_p$ by computing the assignment
defined by the matrix
$
\begin{pmatrix}
I_p & 0\\
0 & 0
\end{pmatrix}
$.

Given two sets of variables  $\{x_1,\ldots, x_p\}$ and
$\{ x_{p+1},\ldots,x_{p+q}\}$ linked respectively by $(Q_1, c_1)$
and $(Q_2, c_2)$, their product is tightly overapproximated by:
$
\Ell\left (
\begin{pmatrix}
\frac{Q_1}{2} & 0 \\
0 & \frac{Q_2}{2}
\end{pmatrix} ,
\begin{pmatrix}
c_1 \\
c_2
\end{pmatrix}
\right )
$.

\section{Verifying Linear Matrix Inequalities}
\label{LMI}

We now describe how we check the LMI's that determine the soundness of our
analysis. We use interval arithmetic: the coefficients are
intervals of floating-point numbers. Each atomic operation (addition,
multiplication\ldots) is overapproximated in the interval domain.

\subsection{Cholesky Decomposition}

Recall that the SDP solver  gives us an inequality of
the form
$
\sum\limits_{i=0}^r \alpha_iA_i
\succeq 0
$,
and candidate coefficients $(\alpha_i)$.

We translate each matrix and coefficient
into the interval domain, and sum them up in interval arithmetic
so that the soundness of the result does not depend on the floating-point
computation of the linear expression.

Then, we compute the Cholesky decomposition of the resulting matrix in
interval arithmetic. That is, we decompose \cite{mpmath}
 the matrix $A$ into
$A=LDL^T$ where $D$ is an interval diagonal matrix and $L$ a
(non interval) lower triangular
matrix with ones on the diagonal. Checking that $D$ has only positive
coefficients implies that $A$ is definite positive.

	\subsection{Practical Aspects of the Ellipsoidal Operations}
\subsubsection{The Precision Issue.}
The limitation in the precision of the computations makes us unable
to actually test whether a matrix is semidefinite positive: we can
only decide when a matrix is definite positive ``enough''. For
instance, standard libraries\footnote{E.g mpmath\cite{mpmath}} fail
at deciding that the null matrix $0$ is semidefinite positive.

It means that for all the operations and
verifications, we have to perform additional overapproximations
in addition to
those made by the SDP solver, such as multiplying the
ratio of the ellipsoid by a number $(1+\epsilon)$. Moreover, in the
verification of LMI's, it can prove useful to explore the
neighborhood $\alpha_i\pm\epsilon>0$ of the parameters $(\alpha_i)$.
As Roux and Garoche write in \cite{plg2}:
``Finding a good way to pad equations to get correct results, while
still preserving the best accuracy, however remains some kind of
black magic.''

\subsubsection{Complexity Results.}
From the complexity results of Porkolab and Khachiyan,
the resolution of an LMI with $m$ terms in dimension $n$ has a
complexity of $O(mn^4)+n^{O(\min(m,n^2))}$ \cite{complex}. Hence the complexity
of the abstract operations is polynomial as a function of the dimension $n$
(i.e., the number of variables), with a degree almost always smaller
than 4 (for most operations, $m\le4$).
The complexity of the Cholesky decomposition can be directly
computed and is $O(n^3)$.

\section{Conic Extrapolation}
\label{conop}

Now that the ellipsoidal operations are well defined, we can describe
the construction of the conic extrapolation. The goal is to analyze
variable transformations when the ellipsoidal radius evolves
(sub-)linearly in the value of the loop counters.

Let us have numerical variables $x=(x_1,\ldots, x_n)$ and loop counters
$y=(y_1,\ldots,y_k)$, we want to control the evolution of
$x$ depending on the counters $y$, which are expected to be monotonically increasing.

Inspired by the ellipsoidal constraints, we can use intersections
of linear inequalities and a quadratic constraint of the form:
\begin{definition}[Conic extrapolation]
\label{defcon}
Let $q$ be a definite positive quadratic form (that is, there is a matrix
$Q\succ 0$ such that $\forall x\in\mathbb{R}^n$, $q(x)=x^TQx$),
$c\in\mathbb{R}^n$, and for $i\in\llbracket 1,k \rrbracket$, $\beta_i>0$,
$\delta_i\in\mathbb{R}^n$, $\lambda_i\in\mathbb{R}$, and $b_i$ a boolean value.
We define the ellipsoidal cone:
\begin{multline*}
\on{Con}((q,c), (\beta_i,\delta_i,\lambda_i, b_i)_{1\le i\le k} ) =\\
\{ (x,y)\in \mathbb{R}^n\times\mathbb{R}^k |
\forall i\in \segk, y_i\ge\lambda_i\mbox{ }\land \\
\forall i\in \segk, (b_i\lor(y_i = \lambda_i))
\mbox{ }\land\\
 q(x-c-\Sk(y_i-\lambda_i)\delta_i)
\le (\Sk\beta_i(y_i-\lambda_i) +1)^2 \}
\end{multline*}
\end{definition}

Let $Q$ be the matrix associated with $q$, $\Ell(Q,c)$ is the ellipsoidal base of
the cone.
The $\lambda_i\in\mathbb{R}$ are the base levels of the cone, that is
the minimum values of the loop counters (usually zero). The
$\delta_i\in\mathbb{R}^n$ are the directions toward which the cone
is ``leaning''
(for instance, with a single loop iterating $\on{x\gets x+1}$, we
would want $\delta$ to be equal to $1$). The $\beta_i\in\mathbb{R}$
determine
the slope of the cone in each dimension. The $b_i$ are boolean
values stating, for each dimension, whether an extrapolation has been
made in this dimension. That is, do we consider only the $(x,y)$ with
$y_i=\lambda_i$ (case $b_i=\on{False}$) or all those with $y_i\ge\lambda_i$ and
verifying the other conditions (case $b_i=\on{True}$).

\begin{figure}[htbp]
\center
\includegraphics[width=0.7\textwidth]{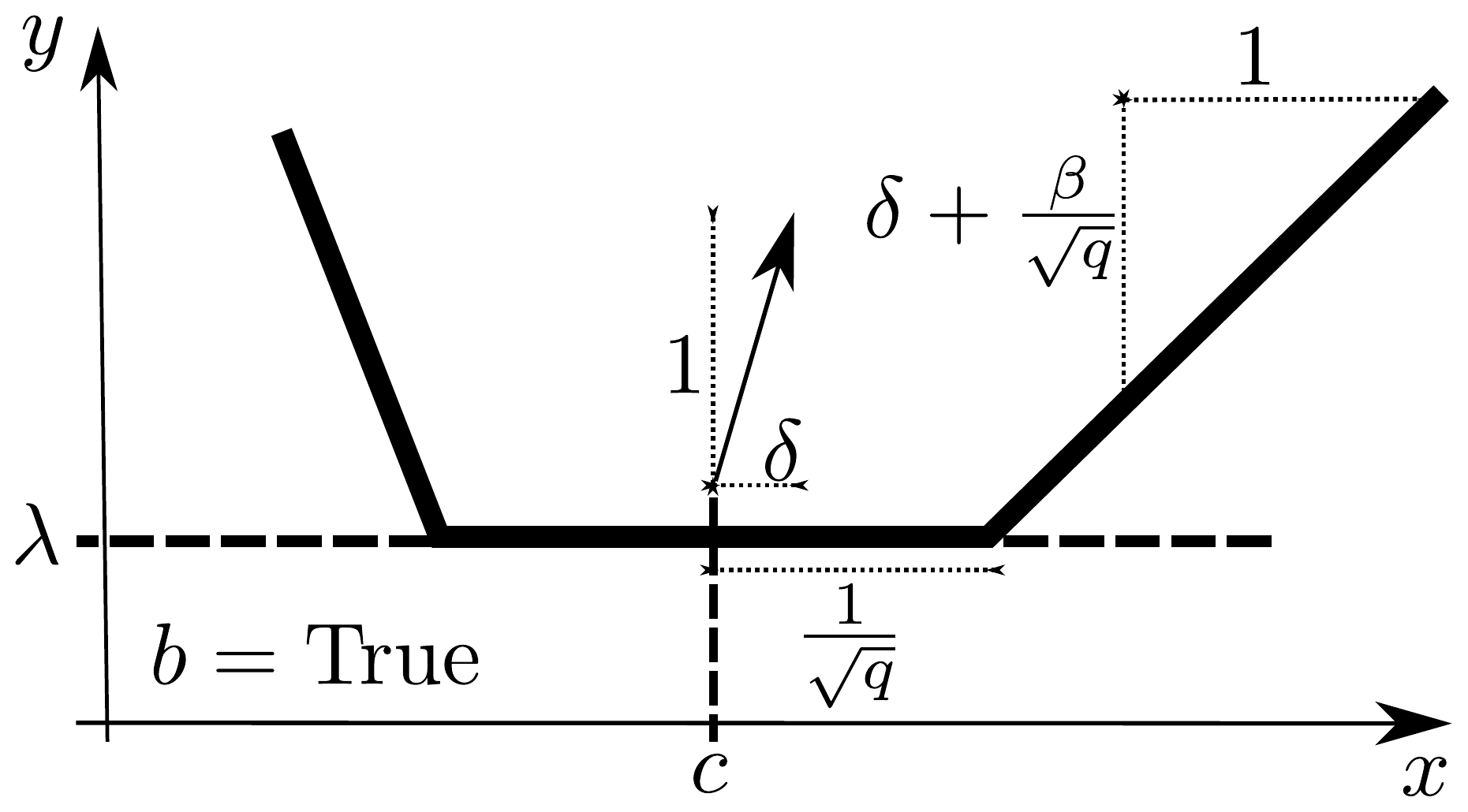}
\caption{Example for $p=1, k=1$.}
\end{figure}

\subsection{Conditions of Inclusion}
We need to be able to test the inclusion of two cones.
The following theorem shows that this inclusion can be reframed as
conditions that can be verified with an SDP solver.
\begin{theorem}
\label{thincl}
If we consider two cones $C=\on{Con}((q,c),
(\beta_i,\delta_i,\lambda_i, b_i)_{1\le i\le k})$ and
 $C'=\on{Con}((q',c'),
(\beta_i',\delta_i',\lambda_i', b_i')_{1\le i\le k})$,
then $C\subset C'$ if and only if
$$
\left\{
\begin{aligned}
& (i)\quad\forall i\in\segk, \lambda_i'\le\lambda_i
			\mbox{ and } \lambda_i>\lambda_i' \Rightarrow b_i'\\
& (ii)\quad\Ell(q,c)\subset \Ell\left(\frac{q'}{
		(1+\Sk\beta_i'(\lambda_i-\lambda_i'))^2},
		c'+\Sk(\lambda_i-\lambda_i')\delta_i'\right)\\
& (iii)\quad\forall i\in\segk ,
	b_i \Rightarrow \left(b_i' \mbox{ and } {\beta_i'}^2\ge
	\max_{u\in\mathbb{R}^n,q(u)\le 1}q'(\beta_iu+\delta_i-\delta_i') \right) \\
\end{aligned}
\right.
$$
\end{theorem}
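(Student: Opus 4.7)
The plan is to prove the two implications separately. Throughout, set $s_i = y_i - \lambda_i \geq 0$ and $t_i = \lambda_i - \lambda_i'$, and write $\|\cdot\|_{q'}$ for the norm on $\mathbb{R}^n$ induced by $q'$. The three items naturally correspond to three probes of cone-membership: the linear/boolean part in (i), the base-level cross-section in (ii), and the asymptotic slope in (iii).

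For the necessity direction, assuming $C \subset C'$, I would evaluate inclusion on carefully chosen points of $C$. Condition (i) falls out of the base point $(c, \lambda) \in C$, whose membership in $C'$ forces $\lambda_i \geq \lambda_i'$ and, when the inequality is strict, $b_i'$. Condition (ii) comes from the same slice $y = \lambda$: the $C'$ inequality evaluated there is exactly the right-hand ellipsoidal inclusion. For (iii), when $b_i$ holds, I would consider, for every $v$ with $q(v) \leq 1$, the ray $y_j = \lambda_j$ ($j \neq i$), $y_i = \lambda_i + t$, $x = c + t\delta_i + (1 + \beta_i t) v$, which lies in $C$ for every $t \geq 0$; substituting into the $C'$ quadratic inequality, dividing by $t^2$, and letting $t \to \infty$ yields $q'(\beta_i v + \delta_i - \delta_i') \leq (\beta_i')^2$, while the same ray for any $t > 0$ gives $y_i > \lambda_i'$, forcing $b_i'$.

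For the sufficiency direction, given $(x, y) \in C$, the linear and boolean conditions for $C'$ follow immediately from (i) and (iii). For the quadratic condition, I would reparameterize $x = c + \sum_i s_i \delta_i + \tau v$ where $\tau = 1 + \sum_i \beta_i s_i$ and $q(v) \leq 1$, and then examine the vector $x - c' - \sum_i (s_i + t_i) \delta_i'$. The algebraic heart of the argument is the rewriting
\[
x - c' - \sum_i (s_i + t_i)\delta_i' = \Bigl(c + v - c' - \sum_i t_i \delta_i'\Bigr) + \sum_i s_i \bigl(\beta_i v + \delta_i - \delta_i'\bigr),
\]
which hinges on the identity $\tau v = v + \sum_i \beta_i s_i v$. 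Condition (ii) bounds the $\|\cdot\|_{q'}$-norm of the first summand by $1 + \sum_i \beta_i' t_i$, and condition (iii), applied with $u = v$ (using that $s_i > 0$ implies $b_i$ and hence $b_i'$), bounds each $\|\beta_i v + \delta_i - \delta_i'\|_{q'}$ by $\beta_i'$. The triangle inequality for $\|\cdot\|_{q'}$ then yields $1 + \sum_i \beta_i'(s_i + t_i)$ as the desired upper bound.

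I expect the main obstacle to be discovering this decomposition: without splitting $\tau v$ as $v + \sum_i \beta_i s_i v$, the contribution of the base-ellipsoid inclusion cannot be separated from that of the directional slopes, and the triangle inequality cannot close. The remaining ingredients---the asymptotic limit in the necessity direction and the boolean case analysis---should be routine once the decomposition is in hand.
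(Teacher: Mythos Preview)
Your plan is correct and matches the paper's argument in all its essential mechanics: the same decomposition $\tau v = v + \sum_i \beta_i s_i v$ to isolate the base-ellipsoid term from the slope terms, the same triangle inequality in the $q'$-norm for sufficiency, and the same ray $x(t)=c+t\delta_i+(1+\beta_i t)v$ followed by a $t\to\infty$ asymptotic for necessity of~(iii). The only organizational difference is that the paper first isolates the case $\lambda_i=\lambda_i'=0$ as Lemma~\ref{lemm1} and then reduces the general theorem to it by intersecting $C'$ with the orthant $\{y_i\ge\lambda_i\}$ (which rewrites $C'\cap\on{Orth}_\lambda$ as a cone with the rescaled base $\Ell(q'/R^2,\,c'+\sum t_i\delta_i')$), whereas you carry the offsets $t_i=\lambda_i-\lambda_i'$ through the computation directly; the two routes are equivalent and produce the same inequalities.
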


To prove this theorem, we first consider the case when the two cones have the
same base levels. That is, we reduce it to the case when all the $\lambda_i$'s
are equal to $0$.

\begin{lemma}
\label{lemm1}
If we consider two cones $C=\on{Con}((q,c),
(\beta_i,\delta_i,0, b_i)_{1\le i\le k})$ and
$C'=\on{Con}((q',c'),
(\beta_i',\delta_i',0, b_i')_{1\le i\le k})$,
then $C\subset C'$ if and only if
$$
\left\{
\begin{aligned}
& (i)\quad\Ell(q,c)\subset \Ell(q',c') \\
& (ii)\quad\forall i\in\segk ,
	b_i \Rightarrow \left(b_i' \mbox{ and } {\beta_i'}^2\ge
	\max_{u\in\mathbb{R}^n,q(u)\le 1}q'(\beta_iu+\delta_i-\delta_i') \right) \\
\end{aligned}
\right.
$$
\end{lemma}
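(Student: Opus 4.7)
My approach is to treat $\sqrt{q'}$ as a norm on $\mathbb{R}^n$ (which is legitimate since $q'$ is positive definite) and exploit the triangle inequality for that norm. The two directions split as follows.

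\paragraph{The easy direction ($\Rightarrow$).}
For $(i)$, I would observe that taking $y=0$ satisfies all side conditions of $C$ (each check $b_i\lor (y_i=0)$ becomes trivial, and $1+\Sk\beta_iy_i=1$), so $(x,0)\in C$ iff $q(x-c)\le 1$; similarly for $C'$. Hence $C\subset C'$ forces $\Ell(q,c)\subset\Ell(q',c')$. For $(ii)$, fix $i$ with $b_i$ true and restrict attention to $y=te_i$, $t\ge 0$. First, by taking $x=c+t\delta_i$ we get a point of $C$ with $y_i>0$, so membership in $C'$ forces $b_i'$ true. To extract the quadratic inequality, I would plug $x=c+t\delta_i+(1+\beta_i t)u$ with $q(u)\le 1$ (this parameterizes exactly the points of $C$ on that ray in $y$-space), and after rewriting $x-c'-t\delta_i'$, divide the inequality $q'(x-c'-t\delta_i')\le(1+\beta_i't)^2$ by $t^2$ and let $t\to\infty$. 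The surviving terms give $q'(\beta_i u+\delta_i-\delta_i')\le\beta_i'^2$ uniformly in $u$.

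\paragraph{The interesting direction ($\Leftarrow$).}
Given $(x,y)\in C$, write $T=1+\Sk\beta_iy_i$ and parameterize $x=c+\Sk y_i\delta_i+Tu$ with $q(u)\le 1$. The side conditions for $C'$ are immediate: $y_i\ge 0$ holds, and if $y_i>0$ then $b_i$ is true and thus by $(ii)$ so is $b_i'$. For the quadratic inequality, I would algebraically rearrange
\[
x-c'-\Sk y_i\delta_i' \;=\; \bigl((c+u)-c'\bigr) + \Sk y_i\bigl(\beta_iu+\delta_i-\delta_i'\bigr),
\]
using $Tu=u+\Sk\beta_iy_iu$. By $(i)$, $q'((c+u)-c')\le 1$, and by $(ii)$, $q'(\beta_iu+\delta_i-\delta_i')\le\beta_i'^2$ for each $i$ with $y_i>0$ (the other $i$ contribute zero). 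Applying the triangle inequality for the norm $\sqrt{q'}$ yields
\[
\sqrt{q'\bigl(x-c'-\Sk y_i\delta_i'\bigr)}\;\le\; 1+\Sk y_i\beta_i',
\]
which, since both sides are nonnegative, squares to the desired conic constraint.

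\paragraph{Main obstacle.}
The routine calculations all reduce to this single algebraic identity that lets me isolate a ``base'' piece $(c+u)-c'$ and $k$ ``slope'' pieces $\beta_iu+\delta_i-\delta_i'$, each bounded by the corresponding hypothesis. The only subtle point is justifying that the limiting argument in $(\Rightarrow)$ really does recover the \emph{uniform} bound in $u$: one must verify that the choice of $u$ in the parameterization is free (all unit-ball vectors of $q$ are attained), so that the limit inequality holds for every $u$ with $q(u)\le 1$.
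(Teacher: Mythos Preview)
Your proposal is correct and follows essentially the same route as the paper's own proof: the same parameterization $x=c+\sum y_i\delta_i+Tu$ with $q(u)\le 1$, the same algebraic splitting of $x-c'-\sum y_i\delta_i'$ into a base piece $(c+u)-c'$ and slope pieces $\beta_iu+\delta_i-\delta_i'$, and the triangle inequality for $\sqrt{q'}$ in the $\Leftarrow$ direction. For $\Rightarrow$, the paper argues by contradiction using the reverse triangle inequality on the same family $x(t)=(1+\beta_it)u+t\delta_i+c$ to exhibit a point of $C\setminus C'$ for large $t$, whereas you divide by $t^2$ and pass to the limit; these are two phrasings of the same asymptotic argument.
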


The proof of these two results is postponed to the end of the section.

\subsection{Test of Inclusion}
Theorem~\ref{thincl} enables us to build a sound test of
inclusion between two cones $C$ and $C'$. Condition $(i)$ can be
directly tested. We can use the procedure of the previous section
to test the ellipsoidal inclusion of condition $(ii)$.

Note that in practice, the test of inclusion will be used (during
widening iterations) on cones with the same ellipsoidal base. In
these cases, we do not want the overapproximations of the SDP solver
to reject the
inclusion. Therefore we should directly test whether the bases
$(Q,c)$ and the $\lambda_i$'s are equal (as numerical values) and answer
$\on{True}$ for
the test of base inclusion in this case.

To perform a sound test on the subcondition of $(iii)$:
$${\beta_i'}^2\ge\max_{u\in\mathbb{R}^n,q(u)\le 1}q'(\beta_iu+\delta_i-\delta_i')$$
we can compute an overapproximation
of
$M=\max_{u\in\mathbb{R}^n, q(u)\le 1}q'(\beta_iu+\delta_i-\delta_i')$

From Theorem~\ref{Sproc}, we know that
$$
M=1+\min_{s,t\in \mathbb{R}}\left\{t \mbox{ $|$ }
	 s\ge 0 \text{ and }  s F(\frac{Q}{\beta_i^2},0)+t E_{n+1}
	\succeq F(Q',\delta_i'-\delta_i)\right\}
$$
So for any feasible solution $(s,t)$ of this SDP problem, $1+t$ is
a sound overapproximation of $M$.

\subsection{Affine Operations on Cones}
\subsubsection{Counter Increment.}
The abstract counterpart of a statement $y_i\gets y_i+v$ for some
value $v$, is the operation $\lambda_i\gets \lambda_i+v$:
after the statement, the constraint is verified for $y_i-v$, and
making this change in Definition~\ref{defcon} leads to the new
value of $\lambda_i$.
To have a sound result, we can compute the sum in interval arithmetic,
and take the lower bound. Note that, in general, the loop counters
are integer valued. In that case, the value can be computed
exactly.

\subsubsection{Affine Transformations.}
We want to have a sound counterpart for the affine assignment
$x\gets Ax+b$ where $A$ is a matrix and $b$ a vector. Let us fix
the values of $(y_i)_{1\le i\le k}$ and note
$R=(1+\Sk\beta_i(y_i-\lambda_i))$ and
$\hat{c}=c+\Sk (y_i-\lambda_i)\delta_i$. Let $Q$\\ be the matrix of
$q$. We first want to find $(Q',c') \text{ such that we have the inclusion }\\
\{Ax+b | x\in\Ell(\frac{Q}{R^2},\hat{c})\} \subset
\Ell(\frac{Q'}{R^2},c')
$.
By symmetry, we can set $c'=A\hat{c}+b$. Thus, by doing the same
calculations as in Sect.\ref{ellaff}, we have
$$
\{Ax+b | x\in\Ell(\frac{Q}{R^2},\hat{c})\}
				\subset \Ell(\frac{Q'}{R^2},c')
\iff  Q\succeq A^TQ'A
$$
The last condition does not depend on the $y_i$'s, so for any
quadratic form $q'$ whose matrix $Q'$ verifies $Q\succeq A^TQ'A$
(which is an SDP equation, we can add the conditions of volume
minimization of (\ref{sdpU}), and
$Q'\preceq \frac{1}{\epsilon} I_n$ with
$\epsilon$ small enough, to ensure numerical convergence),
we have
\begin{multline*}
\{ (Ax, y) | (x,y)\in
\on{Con}((q,c), (\beta_i,\delta_i,\lambda_i, b_i)_{1\le i\le k} ) \}
\subset \\
\on{Con}((q',Ac),(\beta_i,A\delta_i,\lambda_i, b_i)_{1\le i\le k} )
\end{multline*}

As in the case of ellipsoidal assignments, $Ac+b$ and $A\delta_i$ are
computed in floating-point arithmetic. Hence once they are computed,
we have to ensure that the resulting numerical cone contains the
formally defined cone, i.e. we need to verify the
inclusion of
ellipsoidal bases with (\ref{verifassign}) and the procedure
described in Sect.\ref{ellaff}. We also
need to verify the
conic inclusion, i.e. the fact that
$\beta_i'\ge \beta_i+\sqrt{q'(A\delta_i-\delta_i')}$,
where $\beta_i'$ and $\delta_i'$ are the parameters of the resulting
cone.
So we may have to update the parameters and verify the
inequalities in a sound manner.

\subsection{Addition and Removal of Counters}
When the analyzer enters a new loop, it needs to take into account
the previous constraint and add a dependency on the current loop
counter $y_i$. Moreover, when it exits a loop, it needs to build a
new constraint overapproximating the previous one that does not
involve the counter $y_i$.

Ellipsoidal constraints can be seen as conic constraints with $k=0$.
Hence we study the problem of adding and removing counters to a
conic constraint
$\on{Con}((q,c), (\beta_i,\delta_i,\lambda_i, b_i)_{1\le i\le k} ) $.

\subsubsection{Adding a Counter.}

Let $y_{k+1}$ be the counter we want to add. Let $\lambda_{k+1}$ be
the minimal value of the counter inferred at this point. We set
$\beta_{k+1}=0$, $\delta_{k+1}=0$ and $b_{k+1} = \on{True}$ if the
value of $y_{k+1}$ at this point of the analysis is not known precisely,
 else if we know that $y_{k+1}=\lambda_{k+1}$, then $b_{k+1}=\on{False}$.

That gives us the constraint
$\on{Con}((q,c), (\beta_i,\delta_i,\lambda_i, b_i)_{1\le i\le k+1})$.

\begin{proof}
It is immediate from Definition~\ref{defcon}
and the distinction made on what
we know about $y_i$, that this constraint overapproximates the set of
reachable $(x,y)$ at this point.
\qed
\end{proof}

\subsubsection{Removing a Counter.}
We now want to remove the counter $y_k$ from the conic constraint,
provided that we know that $y_k\in[\lambda_k, M]$ with $M<+\infty$
(note that if it happens that $b_k=\on{False}$, then $M=\lambda_k$).

\begin{figure}[h]
\center
\includegraphics[width=0.5\textwidth]{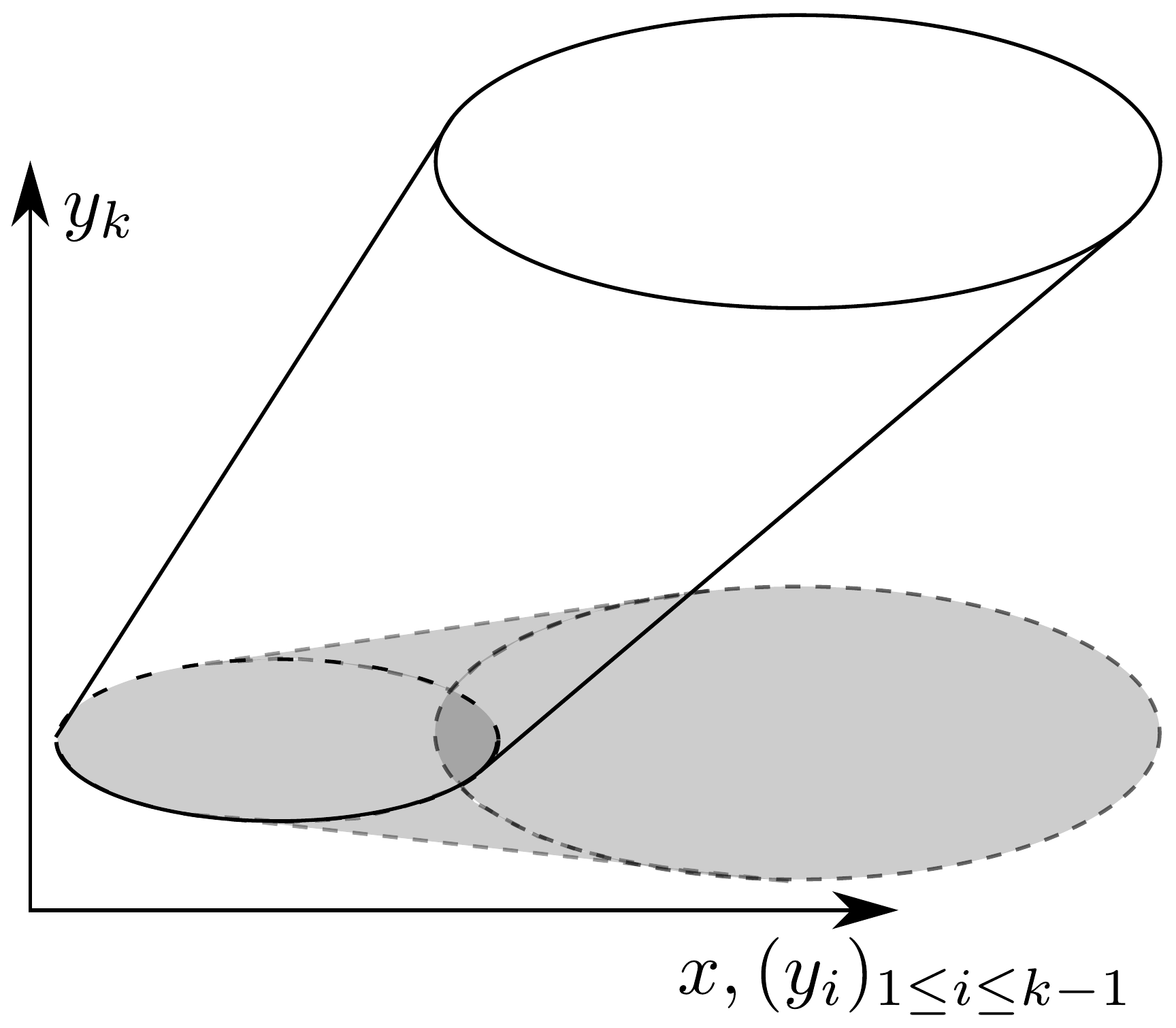}
\caption{Removing the counter $y_k$, hence projecting along its
direction.}
\end{figure}

\begin{theorem}
\label{proj}
Let $C=\on{Con}((q,c),
(\beta_i,\delta_i,\lambda_i, b_i)_{1\le i\le k+1})$. Then $C$ is
convex and we have
$$C_{|y_k\in[a,b]}=C\cap\{(x,y) | y_k\in[a,b]\}= \on{Conv}(
C\cap\{(x,y) | y_k=a \lor y_k=b\}).$$
where we suppose $a\ge\lambda_k$ and where $\on{Conv}(X)$ is the
convex hull of $X$.
\end{theorem}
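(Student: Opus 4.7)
I will first prove convexity of $C$, then establish the set equality by two inclusions; the nontrivial direction uses an explicit convex-combination construction exploiting the affine dependence of the cone's ``center'' and ``radius'' on the counter vector $y$.

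For convexity, introduce the affine functions $\hat c(y) := c + \sum_i (y_i - \lambda_i)\delta_i$ and $R(y) := 1 + \sum_i \beta_i(y_i - \lambda_i)$. The definition of $C$ is the conjunction of the linear inequalities $y_i \ge \lambda_i$ (half-spaces), the conditional equalities $y_i = \lambda_i$ when $b_i$ is $\on{False}$ (hyperplanes), and the quadratic constraint $q(x - \hat c(y)) \le R(y)^2$. On the feasible domain, $\beta_i > 0$ and $y_i \ge \lambda_i$ give $R(y) \ge 1 > 0$, so the quadratic constraint rewrites as $\sqrt{q(x - \hat c(y))} \le R(y)$. The left-hand side is the Euclidean norm of $Q^{1/2}(x - \hat c(y))$, a norm composed with an affine map of $(x,y)$, hence convex; the right-hand side is affine. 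So $C$ is an intersection of convex sets.

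For the set equality, the inclusion $\on{Conv}(C \cap \{y_k = a \lor y_k = b\}) \subseteq C_{|y_k \in [a,b]}$ is automatic: the right-hand side is convex and contains $C \cap \{y_k = a \lor y_k = b\}$ since $a, b \in [a,b]$. For the reverse inclusion, I would take $(x^*, y^*) \in C$ with $y^*_k \in [a,b]$, write $y^*_k = (1-t)a + tb$ for some $t \in [0,1]$, and let $y^a$ (resp.\ $y^b$) agree with $y^*$ in all coordinates except the $k$-th, which is set to $a$ (resp.\ $b$). Since $\hat c$ and $R$ are affine in $y$, $\hat c(y^*) = (1-t)\hat c(y^a) + t \hat c(y^b)$ and $R(y^*) = (1-t) R(y^a) + t R(y^b)$. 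Define
$$ x^a = \hat c(y^a) + \frac{R(y^a)}{R(y^*)}\bigl(x^* - \hat c(y^*)\bigr), \quad x^b = \hat c(y^b) + \frac{R(y^b)}{R(y^*)}\bigl(x^* - \hat c(y^*)\bigr). $$
A direct computation gives $(1-t)(x^a, y^a) + t(x^b, y^b) = (x^*, y^*)$, and quadratic homogeneity yields $q(x^a - \hat c(y^a)) = (R(y^a)/R(y^*))^2\, q(x^* - \hat c(y^*)) \le R(y^a)^2$, so the quadratic constraint holds at $(x^a, y^a)$; analogously for $(x^b, y^b)$. The side conditions also hold: $y^a_k = a \ge \lambda_k$ by hypothesis and $y^b_k = b \ge a \ge \lambda_k$; for $i \ne k$ the constraints on $y_i$ are inherited from $y^*$; and if $b_k = \on{False}$ then $M = \lambda_k$ forces $a = b = \lambda_k$, so consistency is preserved.

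The main obstacle is identifying the correct scaling construction: the affine dependence of both $\hat c$ and $R$ on $y$, combined with the quadratic homogeneity of $q$, is precisely what allows a single point of $C$ to be written as a convex combination of two endpoint-slice points of $C$. Once that radial-scaling ansatz is written down, the remaining work is the routine verification of the quadratic and side constraints sketched above.
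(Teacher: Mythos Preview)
Your proposal is correct and follows essentially the same route as the paper's proof. Your radial-scaling construction $x^a = \hat c(y^a) + \tfrac{R(y^a)}{R(y^*)}(x^* - \hat c(y^*))$ is exactly the paper's construction rewritten: the paper parameterizes $x - \hat c(y) = R(y)\,Su$ with $\|u\|_2 \le 1$ (where $S = Q^{-1/2}$) and then observes that, for fixed $u$, the map $y_k \mapsto (\hat c(y) + R(y)Su,\,y)$ is affine, yielding the same endpoints $x_a,x_b$ as yours. Your treatment is slightly more complete in that you give an explicit convexity argument (norm of an affine map bounded by an affine function), whereas the paper simply asserts convexity; conversely, the paper's $u$-parameterization makes membership of the endpoints in $C$ immediate without a separate quadratic-homogeneity check. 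One cosmetic point: your handling of the $b_k=\on{False}$ case invokes the surrounding context (``$M=\lambda_k$'') rather than the bare theorem hypotheses; it is cleaner to note that $b_k=\on{False}$ forces $y^*_k=\lambda_k=a$, hence $t=0$ and $(x^*,y^*)=(x^a,y^a)$ already lies in the endpoint slice.
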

\begin{proof}
Up to translation, we can assume that
$\forall i\in\segk, \lambda_i=0$.
If $a=b$, it is immediate. We suppose $a<b$. Then, if $Q$ is the
matrix of $q$, let $S$ be the inverse of its square root
($S^{-2}=Q$). We have
$$
\begin{aligned}
(x,y)\in C  \iff &  q(x-c-\Sk y_i\delta_i)
		\le(1+\Sk\beta_iy_i)^2 \\
		\iff & \exists u\in\mathbb{R}^p, ||u||_2\le 1,
		x-c-\Sk y_i\delta_i =
		(1+\Sk\beta_iy_i)Su 		\\
		\iff & \exists u\in\mathbb{R}^p, ||u||_2\le 1,
	 x = \frac{y_k-a}{b-a}x_b + (1-\frac{y_k-a}{b-a})x_a \\
	 \mbox{where } &
	 x_b = (c+\sum\limits_{i=1}^{k-1}y_i\delta_i+
		(1+\sum\limits_{i=1}^{k-1}\beta_iy_i)Su +
		b(\delta_k+\beta_kSu)) \\
	& x_a =(c+\sum\limits_{i=1}^{k-1}y_i\delta_i+
		(1+\sum\limits_{i=1}^{k-1}\beta_iy_i)Su +
			a(\delta_k+\beta_kSu))
\end{aligned}
$$
From the
previous equivalences, we have
$z_a=(x_a,y_1,\ldots,y_{k-1},a)\in C$ and
$z_b=(x_b,y_1,\ldots,y_{k-1},b)\in C$. Moreover
$(x,y)=\frac{y_k-a}{b-a}z_b + (1-\frac{y_k-a}{b-a})z_a$.
\qed
\end{proof}
Let $\pi_{y_k}$ be the projection along $y_k$. Since convexity and
barycenters are preserved up to projections,
$
\pi_{y_k}(C_{|y_k\in[a,b]})=\on{Conv}(
\pi_{y_k}(C_{|y_k=a}) \cup
\pi_{y_k}(C_{|y_k=b}))
$. So, by a direct calculation
$$
\begin{aligned}
\pi_{y_k}(C_{|y_k=a}) &=
 \on{Con}((\frac{q}{(1+\beta_k(a-\lambda_k))^2},
				c+(a-\lambda_k)\delta_k),\\
					&\mbox{\quad\quad\quad\quad\quad\quad\quad\quad}
(\frac{\beta_i}{1+(a-\lambda_k)\beta_k},
	\delta_i,\lambda_i,b_i)_{1\le i\le k-1})
\end{aligned}
$$
We have a similar equality for $\pi_{y_k}(C_{|y_k=b})$, hence we just
have to compute the join
$(\pi_{y_k}(C_{|y_k=a}) \bigsqcup_{\on{Con}}
\pi_{y_k}(C_{|y_k=b}))$, which is an
overapproximation of the convex hull of the union.

However, we have to implement this operation such that it is sound
when computed in floating-point arithmetic. Via affine transformation,
we can soundly compute an
ellipsoidal base $\Ell(q^*,c^*)$ such that
$$
\left\{
	  \begin{aligned}
	  & q(x-(c+(a-\lambda_k)\delta_k))\le (1+\beta_k(a-\lambda_k))^2\\
	  & q(x-(c+(b-\lambda_k)\delta_k))\le (1+\beta_k(b-\lambda_k))^2\\
	  \end{aligned}
		  \right.
\Rightarrow q^*(x-c^*)\le 1
$$
Then, for any $i\in\llbracket1,k-1\rrbracket$ such that
$b_i=\on{True}$, if we note $\beta_i^*$ and $\delta_i^*$ the
parameters of the resulting cone, in order to have an inclusion of
$C_{|y_k=a}$ and $C_{|y_k=b}$ in $C^*$, we need to establish by
Theorem~\ref{thincl} that:
$$
\left\{
\begin{aligned}
& \beta_i^{*2} \ge\max_{\frac{q(u)}{(1+\beta_k(a-\lambda_k))^2}\le 1}
   q^*(\frac{\beta_i}{1+(a-\lambda_k)\beta_k}u+\delta_i-\delta_i^*)\\
& \beta_i^{*2} \ge\max_{\frac{q(u)}{(1+\beta_k(b-\lambda_k))^2}\le 1}
   q^*(\frac{\beta_i}{1+(b-\lambda_k)\beta_k}u+\delta_i-\delta_i^*)
\end{aligned}
	\right.
$$
And since from our hypothesis on $\Ell(q^*,c^*)$ we know that
$q\succeq q^*$, we can just set $\delta_i^*=\delta_i$ and the
condition becomes $\beta_i^*\ge\beta_i$. So we can just define
$\beta_i^*=\beta_i$, hence the resulting cone after the removing of
the $k^{\mbox{th}}$ counter is
$$
\on{Con}((q^*,c^*),(\beta_i,\delta_i,\lambda_i,b_i)_{1\le i\le k-1})
$$

\subsection{A Widening Operator}
Let $C=
	\on{Con}((q,c), (\beta_i,\delta_i,\lambda_i, b_i)_{1\le i\le k})$
and $C'=
	\on{Con}((q',c'),
		(\beta_i',\delta_i',\lambda_i', b_i')_{1\le i\le k})$.
We suppose that
$\forall i\in\segk, \lambda_i\le\lambda_i'$ and
$\exists i\in\segk,\lambda_i<\lambda_i'$.

We want to define a widening operator $\W$ over cones. The intuitive
idea is
that if $C$ ``starts strictly below'' $C'$ (cf. the conditions on
the $\lambda_i$'s), then $C^*=C\W C'$ has the same ellipsoidal base
as $C$, but its opening has been ``widened'' to contain $C'$. The
decision of only changing the opening and the orientation of the cone
(i.e., to change only the $\beta_i$'s and $\delta_i$'s) relies on the
hypothesis that the relative shift of $C'$ from $C$ has good
chances to be reproduced again. Hence the name of
``conic extrapolation''.

\subsubsection{Definition of $\W_p$.}

More formally, we first study the special case in which we know
that $\Ell(q',c')\subset C$ and we define a partial widening
operator $\W_p$.

Let $C^*=C\W_pC'= \on{Con}((q,c),
(\beta_i^*,\delta_i^*,\lambda_i, b_i^*)_{1\le i\le k})$.
We note $(i),(ii),(iii)$ (resp. $(i'),(ii'),(iii')$) the
conditions of Theorem~\ref{thincl}  relative to the inclusion
$C\subset C^*$ (resp. $C'\subset C^*$). By construction of $C^*$,
we already have $(ii)$ and with our hypothesis
$\Ell(q',c')\subset C'$,
we just need to verify $C\subset C^*$ to have $(ii')$.
We can define $\forall i \in\segk,
b_i^*=(b_i\lor b_i'\lor \lambda_i<\lambda_i')$, which gives us $(i)$
and $(i')$.

Finally to verify $(iii)$ and $(iii')$, we only need to define the
$\beta_i^*$'s and $\delta_i^*$'s such that
$$
\forall i\in\segk,
\left\{
\begin{aligned}
& b_i \Rightarrow  {\beta_i^*}^2\ge
	\max_{q(u)\le 1}q(\beta_iu+\delta_i-\delta_i^*)  \\
& b_i' \Rightarrow  {\beta_i^*}^2\ge
	\max_{q'(u)\le 1}q(\beta_i'u+\delta_i'-\delta_i^*)  \\
\end{aligned}
\right.
$$
which we overapproximate by a triangle inequality for the norm defined by $q$:
$$
\forall i\in\segk,
\left\{
\begin{aligned}
& b_i \Rightarrow  \beta_i^*\ge
	\beta_i+\sqrt{q(\delta_i-\delta_i^*) } \\
& b_i' \Rightarrow  \beta_i^*\ge
	\beta_i'r+\sqrt{q(\delta_i'-\delta_i^*) } \\
& \quad\quad\quad\mbox{ where }
		r\ge\min\{\rho>0 | \frac{q'}{\rho^2}\preceq q\}
\end{aligned}
\right.
$$

With the SDP methods of the first section, we can compute an
overapproximating $r$.
For each $i$, if none of $b_i$ or $b_i'$ is $\on{True}$, then from
our
hypothesis $\Ell(q',c')\subset C'$, $b_i^*=\on{False}$ and we do not
have
to give values to either $\beta_i$ or $\delta_i$. If only $b_i$ (resp.
$b_i'$) is $\on{True}$, then we define
$\delta_i^*=\delta_i$ and $\beta_i^*=\beta_i$
(resp. $\delta_i^*=\delta_i'$ and $\beta_i^*\ge r\beta_i'$).

If $b_i=b_i'=\on{True}$, we want to minimize
$
\max(
	\beta_i+\sqrt{q(\delta_i-\delta_i^*) } ,
	\beta_i'r+\sqrt{q(\delta_i'-\delta_i^*) } ).$
If we fix the $q$-distance $\sqrt{q(\delta_i-\delta_i^*)}$, we want to
minimize the $q$-distance $\sqrt{q(\delta_i'-\delta_i^*)}$. With this
geometrical point of view, we see that the optimal $\delta_i^*$ is
a barycenter of $\delta_i$ and $\delta_i'$.

So we define $\delta_i^*=\mu\delta_i+(1-\mu)\delta_i'$ where we want
to find $\mu\in[0,1]$ minimizing
\begin{multline*}
\max(
	\beta_i+\sqrt{q(\delta_i-\mu\delta_i-(1-\mu)\delta_i') } ,
	\beta_i'r+\sqrt{q(\delta_i'-\mu\delta_i-(1-\mu)\delta_i') } )
= \\
\max(
	\beta_i+(1-\mu)\sqrt{q(\delta_i-\delta_i')} ,
	\beta_i'r+\mu\sqrt{q(\delta_i-\delta_i')} ).
\end{multline*}
Hence, we can exactly (up to floating-point approximations) compute $\mu$, define
$\delta_i^*$ and then $\beta_i^*$.
This construction of $(\beta_i^*,\delta_i^*,b_i^*)_{1\le i\le k}$
ensures that $C,C'\subset C^*$ and defines $\W_p$.

\subsubsection{Definition of $\W$.}

We now study the general case in which the only assumption made is
that
$\forall i\in\segk, \lambda_i\le\lambda_i'$ and
$\exists i\in\segk,\lambda_i<\lambda_i'$.

We define a cone
$C^+ =
\on{Con}((q,c),
	(\beta_i^+,\delta_i^+,\lambda_i, b_i^+)_{1\le i\le k} )$,
which contains the ellipsoidal
base of the two cones. The definition (with $q$, $c$ and the $\lambda_i$)
ensures the inclusion of the
ellipsoidal base of $C$.
Let $R=1+\Sk\beta_i^+(\lambda_i'-\lambda_i)$
and $\Delta=\Sk\delta_i^+(\lambda_i'-\lambda_i)$.

We define $\forall i\in\segk, b_i^+=(\lambda_i < \lambda_i')$.
Let $ r=\min\{\rho>0 | \frac{q}{\rho^2}\preceq q'\}$, if we have
$\Ell(\frac{q}{r^2},c')\subset\Ell(\frac{q}{R^2}, c+\Delta)$ then
$\Ell(q',c')\subset\Ell(\frac{q}{R^2}, c+\Delta)$ and from the
definitions of the $b_i^+$'s and Definition~\ref{defcon},
we would have
$\Ell(q',c')\times\{(\lambda_1',\ldots,\lambda_k')\}\subset C^+$.
To get this result, we need:
$$
 \Ell(\frac{q}{r^2},c')\subset\Ell(\frac{q}{R^2}, c+\Delta)
 \iff \{q(c'-c-\Delta) \le (R-r)^2\} \land \{R\ge r\}
$$
If the transformations applied to the cone are affine, the shift can be
seen as the difference between centers.
So we choose to define $\Delta=c'-c$.
Then we choose the minimal possible value of $R$ to have a cone as
tight as possible: once the  $\delta_i$'s corresponding to $\Delta$
are computed in floating-point arithmetic, we can define an upper bound on
$\sqrt{q(c'-c-\Delta)}+r$ and define $R$ accordingly, so that the above
inequality is verified.

These definitions of $\Delta$ and $R$ must be implemented in terms of
$\beta_i^+$ and $\delta_i^+$. Since we ensured that there is at least
one $i$ such that $(\lambda_i'-\lambda_i)\neq 0$, there is always a
solution. If only one $i$ fits this criterion the solution is unique,
otherwise a choice must be made on how to weight the different
variable.

This uncertainty can be easily explained: recall that in real
programs, only one loop counter is increased at a time, so we know
what causes the change in our constraint. This is not the case if
many loop counters are increased at the same time.

Finally, this definition of $C^+$ allows us to define the widening
operator $\W$ by:
$C\W C'=\left(C\W_pC^+\right)\W_pC'$. Note that the assumptions of
$\W_p$
are verified since $C$ and $C^+$ have the same ellipsoidal base, and
$C^+$, hence $C\W_pC^+$, contains the base of $C'$.

To ensure the convergence of the widening sequence in the cases described
in Sect.~\ref{switch}, we can use a real widening operator on the
$\beta_i$'s that sets them to $+\infty$ after a certain number of steps,
for instance.

\begin{figure}[hbpt]
\center
\includegraphics[width=0.7\textwidth]{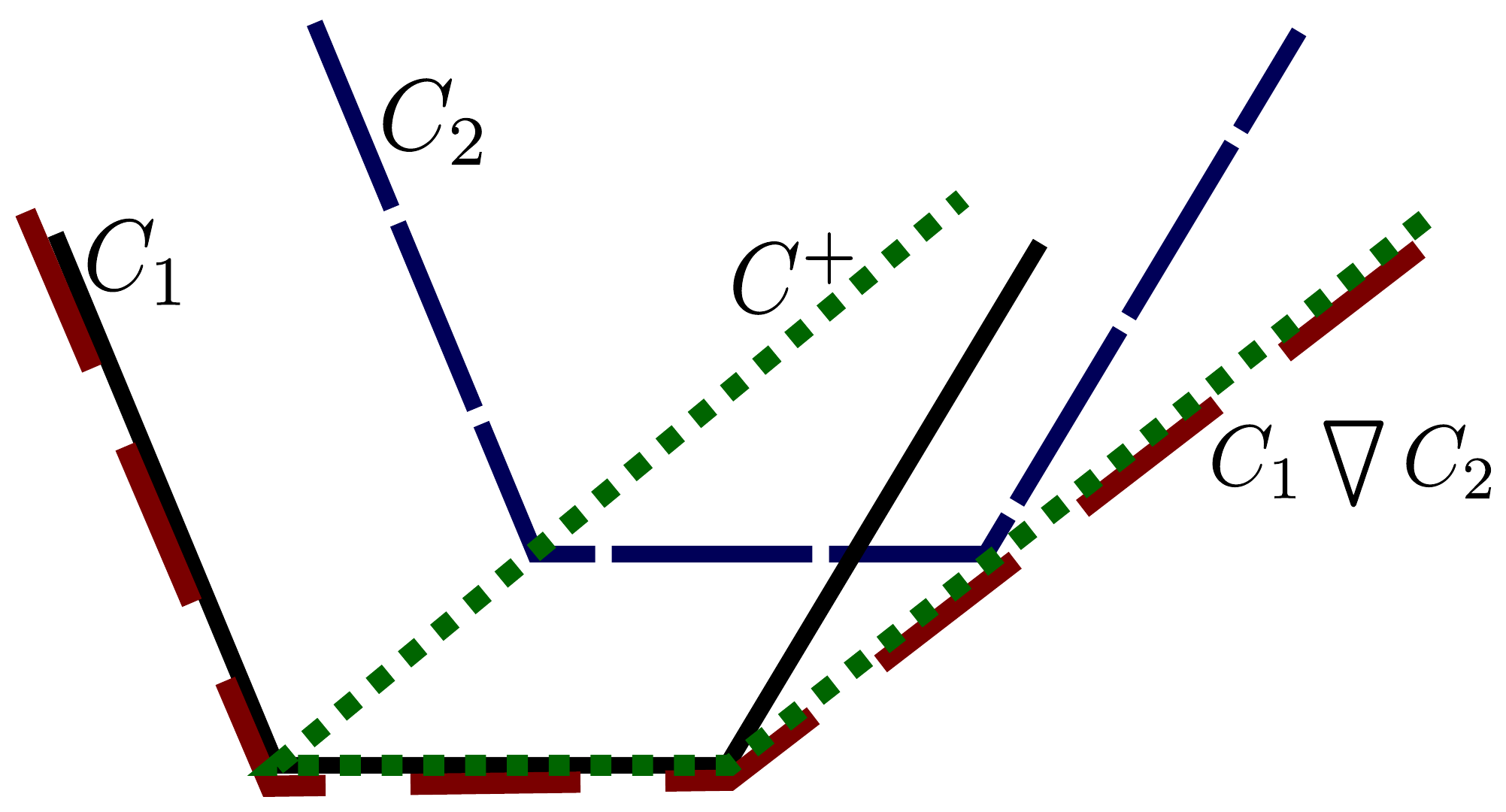}
\caption{Example showing the various cones involved in the
definition of $\W$: $C_1$ in black, $C_2$ in blue, $C^+$ in green and $C_1\W C_2$ in red.}
\end{figure}

\subsection{Proof of the Characterization of Conic Inclusion}

\begin{proof}[Proof of Lemma~\ref{lemm1}]
\quad\newline
$\bullet$ We first prove that $(i)\land(ii)\Rightarrow C\subset C'$.
From $(i)$, we know that \newline
$
\forall u \in\mathbb{R}^p, q(u)\le 1 \Rightarrow
 q'(u+c-c')\le 1
$.
Thus, for $(x,y)\in\mathbb{R}^{p}\times\mathbb{R}_+^k$ such that
$
q(x-c-\Sk y_i\delta_i )\le(1+\Sk \beta_iy_i)^2
$,
we define $\nu=(1+\Sk\beta_iy_i) \ge \sqrt{q(x-c-\Sk y_i\delta_i)}$
and $u=\frac{1}{\nu}(x-c-\Sk y_i\delta_i)$. We have $q(u)\le1$.
\begin{equation}
\label{ineq}
\begin{aligned}
q'(x-c'-\Sk y_i\delta_i')
& = q'\left(u+(c-c') + (\nu-1)u+\Sk y_i(\delta_i-\delta_i')\right)\\
& \le \left(\sqrt{q'(u+c-c')}
		+\Sk y_i\sqrt{q'(\beta_iu+\delta_i-\delta_i')}\right)^2 \\
\end{aligned}
\end{equation}
So if $b_i=\on{False}$ then $y_i=0$, hence
$ y_i\sqrt{q'(\beta_iu+\delta_i-\delta_i')} \le y_i\beta_i'$,
and from $(ii)$, if $b_i=\on{True}$, then we have the
same inequality since $q(u)\le 1$. So this inequality is true for all
$i\in\segk$.

Thus, we have $q'(x-c'-\Sk y_i\delta_i')\le (1+\Sk\beta_i' y_i)^2$
and $\forall i\in\segk$ we have $ y_i\ge 0$ and from $(ii)$,
$y_i>0 \Rightarrow b_i\Rightarrow b_i'$. So $(x,y)\in C'$.
So $C\subset C'$.

$\bullet$ Now we prove that $C\subset C'\Rightarrow (i)\land(ii)$.
It is obvious that $C\subset C'\Rightarrow (i)$ by taking the
intersections of the cones with the set
$\{(x,0)\in\mathbb{R}^{p+k}\}$.

If $\exists i\in\segk \st b_i'=\on{False}$ and $b_i=\on{True}$, then
there exist a point $(x,y)$ of $C$ with $y_i > 0$, so
$(x,y)\notin C'$ and $C\not\subset C'$.

If $\exists i\in\segk \st b_i=\on{True}$ and
$\beta_i'<\max_{q(u)\le 1}\sqrt{q'(\beta_iu+\delta_i-\delta_i')}$,
then let us take $u\in\mathbb{R}^p$ such that $q(u)\le 1$ and
$\beta_i'<\sqrt{q'(\beta_iu+\delta_i-\delta_i')}$. We define
$x(t)=(1+\beta_it)u+t\delta_i+c$.

For any $t \ge 0$,
$
q'(x(t)-c'-t\delta_i') \ge \left(\sqrt{q'(u+c-c')}
		- t\sqrt{q'(\beta_iu+\delta_i-\delta_i')}\right)^2 \\
$

Since $\beta_i'<\sqrt{q'(\beta_iu+\delta_i-\delta_i')}$, and from
the previous inequality, we have for $t$ big enough,
 $q'(x(t)-t\delta_i'-c') > (1+\beta_i')^2$.
By a direct computation $q(x(t)-t\delta_i-c)\le(1+\beta_it)^2$.
And since
$b_i=\on{True}$, we have for $y(t)$ such that $y(t)_i=t$ and
$y(t)_{j\neq i}=0$, $(x(t),y(t))\in C$, $(x(t),y(t))\notin C'$ so
$C\not\subset C'$.

We have proven $\neg(i)\lor\neg(ii)\Rightarrow C\not\subset C'$, so
we finally have $(i)\land(ii)\iff C\subset C'$.
\qed
\end{proof}

\begin{proof}[Proof of Theorem~\ref{thincl}]
Since $(c,
(\lambda_i)_{1\le i \le k})\in C$, it is clear that $(i)$ is a
necessary condition. Moreover,
$C\subset\{(x,y)\in\mathbb{R}^{p+k} | \forall i \in
\segk, y_i\ge\lambda_i\}=:\on{Orth}_\lambda$,
therefore $C\subset C' \iff C\subset C'\cap\on{Orth}_\lambda
\iff C\subset C'\cap\on{Orth}_\lambda\land (i)$.

Directly from Definition~\ref{defcon}, we have for
$R=(1+\Sk\beta_i'(\lambda_i-\lambda_i'))$ the implication
$
(i)\Rightarrow C'\cap\on{Orth}_\lambda =
\on{Con}((
\frac{q'}{R^2},
c'+\sum\limits_{i=1}^k(\lambda_i-\lambda_i')\delta_i'),
(\frac{\beta_i'}{R},\delta_i',\lambda_i, b_i')_{1\le i\le k} )
$.
So up to translation, we can reduce this case to $\lambda_i=\lambda_i'=0$
and apply Lemma~\ref{lemm1}.
$$
\begin{aligned}
 C\subset C'  \iff& C\subset C'\cap\on{Orth}_\lambda\land (i) \\
\iff	&  (i) \land
		   \Ell(q,c)\subset\Ell(\frac{q'}{R^2},c'+
		  				\Sk (\lambda_i-\lambda_i')\delta_i') \\
			&\land
		   \forall i\in\segk, b_i\Rightarrow \left( b_i' \mbox{ and }
		  		\frac{{\beta_i'}^2}{R^2}\ge
			\max_{q(u)\le 1} \frac{q'}{R^2}(\beta_iu+\delta_i-\delta_i')
		  		\right)\\
\iff& (i)\land(ii)\land(iii)
\end{aligned}
$$
\qed
\end{proof}

\section{Application and Convergence}
\label{switch}

\begin{figure}[h]
\begin{minipage}[c]{0.33\linewidth}
\begin{algorithmic}[1]
	\State $x\gets 0 \in \mathbb{R}^n$
	\For{$y$ from 0 to $\infty$}
		\State pick $i\in\llbracket 1,n\rrbracket$
		\State pick $\epsilon\in\{-1,1\}$
		\State $x_i\gets x_i+\epsilon$
	\EndFor
\end{algorithmic}
\end{minipage}
\begin{minipage}[c]{0.5\linewidth}
\begin{tabular}{|r|c|c|c|c|c|c|c|c|}
\hline
$n$ & 2 & 4 & 6 & 8 & 10 & 12 & 14 & 16 \\
\hline
Ell. cones& 3s & 7s & 19s & 49s & 1m56s & 4m16s &8m & 12m\\
\hline
Polyhedra & $<$0.1s & $<$0.1s & 0.3s & 2.5s & 54s & 47m &
			$>$1h & $>$1h \\
\hline
\end{tabular}
\end{minipage}
\caption[Execution test of ellipsoidal cones.]{Example of a
 program and its average analysis time on a \SI{2}{\GHz} CPU.
Ellipsoidal cones have been prototyped\footnotemark{} in Python using
NumPy\cite{numpy}, CVXOPT\cite{cvxopt}, mpmath\cite{mpmath}.
The Apron\cite{apron} C library has been used for polyhedra.}
\label{counters}
\end{figure}

\footnotetext{The prototype Python code and details about benchmarks are
available on \\ {http://www.eleves.ens.fr/home/oulamara/ellcones.html}}

In the definition of the conic extrapolation, we did not describe how
to choose the ellipsoidal base. For instance, it is possible to get
an ellipsoidal shape by computing some iterates of the loop. This
seems to work well in the case of
programs composed of loops and nondeterministic counter increments:
the iterations capture in which direction the counters
globally increase (Fig.~\ref{counters}). Since the diameter of the set
containing the numerical variables grows linearly, any cone will be
overapproximating for $\beta$ big enough.

\subsection{Switched Linear Systems}

\begin{figure}[h]
\begin{minipage}[c]{0.43\linewidth}
\begin{algorithmic}[1]
	\State $x\gets 0 \in \mathbb{R}^n$
	\State $(A_i,b_i)_{1\le i\le k}$ \\ \quad where
		$b_i\in\mathbb{R}^n, A_i\in\mathcal{M}_n(\mathbb{R})$
	\For{$y$ from 0 to $\infty$}
		\State $i\gets \operatorname{rand}(1,n)$
		\State $x\gets A_ix+b_i$
	\EndFor
\end{algorithmic}
\end{minipage}
\begin{minipage}[c]{0.67\linewidth}
\begin{tabular}{|r|c|c|c|c|c|c|c|}
\hline
Benchmark & 1 &2 &3 &4 &5 &6 &7  \\
\hline
Ell. Cones & 1s & 2s & 2s & 2s & 1.8s & 1.2s & 1.3s \\
\hline
Polyhedra & 3.2s & 16.6s & 18s & 24s & $>$1h & $>$1h & 2m35s \\
\hline
\end{tabular}
\end{minipage}

\caption{The structure of a Switched Linear System and some
benchmarks. Experimental conditions are the same as in
Fig.~\ref{counters}. Except for Benchmarks 1 and 2, the resulting
polyhedron is trivial.}
\label{figswitch}
\end{figure}

However, the picture is not as nice if we add linear transformation.
This is the case, for instance, for switched linear systems in
control
theory (Fig.\ref{figswitch}): if the
quadratic form associated with the ellipsoid is not a Lyapunov
function of the linear part of the system (of the
matrices $A_i$ in Fig.\ref{figswitch}), then the growth of its
radius is exponential in the loop counters, and cannot be captured
by our conic extrapolation. So if $Q$ is the matrix of the
ellipsoidal base of the cone, the Lyapunov conditions should be
verified simultaneously : $\forall i,Q-A_i^TQA_i\succeq 0$. This is
not always possible, but one can use the SDP solver to try to find a
suitable $Q$. Note that the identity matrix is not stable in the
sense of control theory and should not be included in the search of
 $Q$. When $Q$ verifies the Lyapunov conditions, it is easy to show
 that for $\beta_i$ large enough, the cone will be invariant during
 loops iterations.

\subsection{Proof of Convergence with the Lyapunov Condition}

Now we show the converse of the assertion of the previous
paragraph: if $q$ is a
Lyapunov function for the matrix $A$ and $b$ is a vector, then for
$\beta$ large enough $C=\on{Con}((q,c),(\beta,\delta,\lambda,\on{True}))$
is stabilized by the iteration of $x\gets Ax+b$.

\begin{proof}
Up to translation, we show the result for $\lambda=0$. We know than for
any $x\in C$, $q(x-c-y\delta)\le(\beta y+1)^2$ and by the Lyapunov
condition, there exist $\epsilon> 0$ such that
$\forall x, q(Ax)\le(1-\epsilon)q(x)$.

Let $\eta=1-\sqrt{1-\epsilon}$ and $M=(A-\on{Id})c+b-\delta$. We have
\begin{multline*}
\sqrt{q(Ax+b-c-(y+1)\delta)}=
		\sqrt{q(A(x-c-y\delta)+(A-\on{Id})(c+y\delta)+b-\delta)} \\
		\le \sqrt{1-\epsilon}(\beta y +1) +
				\sqrt{q((A-\on{Id})(c+y\delta)+b-\delta )} \\
		\le \beta y+1 +\sqrt{q(M+y(A-\on{Id})\delta)} - \eta\beta y \\
		\le \beta y + 1 + \sqrt{q(M)}+y(\sqrt{q((A-\on{Id})\delta)}-\eta\beta)
\end{multline*}

So for $\beta$ large enough ($\beta > \sqrt{q(M)}$ and
$\beta >\sqrt{q((A-\on{Id})\delta)}/\eta$), we have
$q(Ax+b-c-(y+1)\delta)\le(\beta(y+1)+1)^2$.
\qed
\end{proof}

\section{Concluding Remarks and Perspectives}
We proposed an abstract interpretation framework based on ellipsoidal
cones to study systems with (sub-)linear growth in loop counters.
The aim of this work is twofold: to build an extension of the formal
verification of linear systems, and to devise a framework that can be
used outside the context of digital filters. Indeed, only the choice
of the ellipsoidal base of the cone has to deal with control theory
considerations.

The next step is obviously to go beyond the prototype and have a
robust implementation to test this framework on actual systems.
This will involve a research on how to accurately
tune and use the SDP solver, how to deal with precision issues.

The main tools are the SDP solver and the SDP duality to check the
soundness of the results of operations via LMI's. However, we are not
bound to use SDP solvers to compute these results,
and exploring other options might speed up the analysis.

It would also be interesting to generalize this
framework to switched linear systems that are more complex than
those studied above. An example is the analysis in
\cite{switch}.

\subsubsection{Acknowledgments.}
We want to thank Pierre-Loïc Garoche and Léonard Blier for the
fruitful conversations we had with each of them during this work,
as well as the anonymous referees for the time and efforts taken
to review this work.

\end{document}